\newcommand{\R}{{\mathbb{R}}} 
\newcommand{\ra}{\rightarrow} 
\newcommand{\ip}[1]{\langle #1 \rangle }
\def \vol{\mathrm{vol}}
\def \d{\mathrm{d}}
\def \grad{\mathrm{grad}}
\newtheorem{prop}{Proposition}
\newtheorem{lemma}{Lemma}
\newenvironment{proof}{\noindent {\it Proof:\,
}}{\hfill$\Box$\vspace*{0.4cm}}
\newcommand{\be}{\begin{equation}}
\newcommand{\ee}{\end{equation}}
\newcommand{\bea}{\begin{eqnarray}}
\newcommand{\eea}{\end{eqnarray}}
\newcommand{\bb}{\bibitem}
\newcommand{\eqn}{\begin{eqnarray}}
\newcommand{\eqnx}{\end{eqnarray}}
\begin{document}
\title{Thermodynamics of the BPS Skyrme model}

\author{C. Adam}
\affiliation{Departamento de F\'isica de Part\'iculas, Universidad de Santiago de Compostela and Instituto Galego de F\'isica de Altas Enerxias (IGFAE) E-15782 Santiago de Compostela, Spain}
\author{C. Naya}
\affiliation{Departamento de F\'isica de Part\'iculas, Universidad de Santiago de Compostela and Instituto Galego de F\'isica de Altas Enerxias (IGFAE) E-15782 Santiago de Compostela, Spain}
\author{J. Sanchez-Guillen}
\affiliation{Departamento de F\'isica de Part\'iculas, Universidad de Santiago de Compostela and Instituto Galego de F\'isica de Altas Enerxias (IGFAE) E-15782 Santiago de Compostela, Spain}
\author{J.M. Speight}
\affiliation{School of Mathematics, University of Leeds, Leeds LS2 9JT, England}
\author{A. Wereszczynski}
\affiliation{Institute of Physics,  Jagiellonian University,
Reymonta 4, Krak\'{o}w, Poland}

\pacs{11.30.Pb, 11.27.+d}

\begin{abstract}
One problem in the application of the Skyrme model to  nuclear physics is that it predicts too large a value for the compression modulus of nuclear matter.
Here we investigate the thermodynamics of the BPS Skyrme model at zero temperature and calculate its equation of state. Among other results, we find that classically (i.e. without taking into account quantum corrections) the compressibility of BPS skyrmions is, in fact, infinite, corresponding to a zero compression modulus. This 
suggests that the inclusion of the BPS submodel into the Skyrme model lagrangian may significantly reduce this too large value, providing further evidence for the claim that the BPS Skyrme model may play an important role in the description of nuclei and nuclear matter.

\end{abstract}

\maketitle 

\section{Introduction}
The standard Skyrme model \cite{skyrme} has the lagrangian
\begin{equation}
\mathcal{L}= \mathcal{L}_2+\mathcal{L}_4+ \mathcal{L}_0 ,
\end{equation}
where
\begin{equation}
\mathcal{L}_0=-\lambda_0\;  {\cal U}(U), \;\;\; 
\mathcal{L}_2= \lambda_2 \mbox{Tr} \partial_\mu U \partial^\mu U^\dagger, \;\;\; 
\mathcal{L}_4= \lambda_4 \mbox{Tr} ([L_\mu, L_\nu])^2, \;\;\;  
\end{equation}
are the potential, sigma model and Skyrme terms, respectively. Here, $U$ is an SU(2)-valued field of mesons (pions), and $L_\mu = U^\dagger \partial_\mu U$ is the left-invariant current. Further, the potential ${\cal U}$ is a non-negative funtion of $U$ with one unique vacuum. The Skyrme model is considered as an effective theory for low energy QCD. Using results from large $N_c$ expansions, it is known that the proper degrees of freedom in this limit are mesons, while baryons in the Skyrme model emerge as collective excitations, i.e., solitons called skyrmions, with an identification between baryon number and topological charge \cite{thooft}. 
One main problem of the standard Skyrme model is its failure to describe adequately the very small binding energies of physical nuclei. This is related to the fact that, although there exists a topological energy bound in the Skyrme model (\cite{skyrme}, \cite{faddeev}, for improved bounds see \cite{harland}, \cite{gen-bounds}), nontrivial solutions cannot saturate this bound. 
Another problem of the standard Skyrme model, which shall concern us in the present paper, is that it fails to describe simultaneously the hadronic rotational-vibrational excitations (the Roper masses) and the compression modulus of nuclear matter. 

The qualitative reason is the following. The Roper resonances are related to the excitations of the monopole vibrational mode. Technically, they are calculated by quantizing the Derrick scaling factor $\Lambda $ (where the Derrick scaling transformation is $\vec r \to \Lambda \vec r$) and by determining the eigenvalues of the resulting quantum mechanical hamiltonian. In the baryon number $B=1$ sector, the typical fit to nucleon and $\Delta$ resonance masses, although giving quite good values for many observables \cite{anw}, leads to rather unphysical Roper resonances. This situation is improved by coupling the vibrational and rotational modes \cite{roper1}-\cite{roper3}. Then, the Roper masses become bigger but still lighter than the observed resonance energies. On the other hand, assuming that the Derrick rescaling correctly describes the reaction of nuclear matter to the action of external pressure (uniform rescaling), one can easily compute the compression modulus for such a hedgehog solution, using the same model parameters. The resulting value is much higher than the compression modulus of nuclear matter. 

For a more detailed investigation of this problem, let us next introduce a certain generalization of the standard Skyrme model. Indeed,
recently a very special Skyrme type field theory has been proposed \cite{BPS} (our metric conventions are $\eta_{\mu\nu} = {\rm diag}(+,-,-,-)$)
\begin{equation}
\mathcal{L}_{06}=\mathcal{L}_6+ \mathcal{L}_0.
\end{equation}
where 
\begin{equation}
\mathcal{L}_6 = -(24\pi^2)^2 \lambda_6 \mathcal{B}_\mu \mathcal{B}^\mu 
\end{equation}
and 
\begin{equation}
 \mathcal{B}^\mu = \frac{1}{24\pi^2} \epsilon^{\mu \nu \rho \sigma} \mbox{Tr} L_\nu L_\rho L_\sigma, \;\;\; B=\int d^3 x \mathcal{B}^0
\end{equation}
is the baryon (topological) current. The main motivation and advantage of this model, called the BPS Skyrme model, is its BPS property (first noted in \cite{JMP paper}). As a consequence, there are infinitely many solitonic solutions saturating a topological bound, which leads to a linear energy - topological charge relation. Therefore, the classical binding energies are precisely zero. This should be contrasted with the usual Skyrme model, where these energies are too high compared with the experimental values. This is, in fact, the main problem in the application of the Skyrme model to nuclear physics. Non-zero binding energies in the BPS Skyrme model have been derived recently by taking into account the semiclassical quantization of the spin-isospin degrees of freedom, the Coulomb interaction as well as a small isospin breaking. The obtained values are in very good agreement with the nuclear data and the semi-empirical (Weizs\"acker) formula, especially for higher nuclei \cite{BPS bind}, \cite{Marl}. 

Hence, the natural question arises whether the BPS Skyrme model can help to resolve the problem of the predicted compression modulus of nuclear matter being too high, and the predicted Roper masses being too small, as it does with the binding energies. Indeed, it has been shown recently that the Roper masses computed in the BPS Skyrme model are higher than in the standard Skyrme model;  in fact, they are higher than the experimental results \cite{BPS rot-vib}, suggesting that a generalization of the Skyrme model, consisting of all four terms, should give a more realistic description. 

But if one assumes that nuclear matter reacts to external pressure via uniform rescaling, then the Derrick scale parameter $\Lambda$ is the relevant variable for both phenomena, and the sizes both of the compression modulus and of the Roper masses are determined by the same parameter $E^{(2)}\equiv (d^2/d\Lambda^2) E(\Lambda )|$ (the second variation of the energy under rescaling). Here the vertical line $|$ means evaluation at the minimum (at $\Lambda =1$ if only the Derrick scaling is considered; if the scaling (i.e., monopole vibrational) excitation is coupled to other degrees of freedom like, e.g., (iso-)rotational excitations, the minimum may occur at other values). Indeed, in the Roper resonance calculations, $\Lambda$ is quantized ($\Lambda \to \hat \Lambda$) and the harmonic oscillator approximation is used, and $E^{(2)}$ enters directly as the factor multiplying the ``harmonic oscillator potential" $\hat \Lambda^2$. On the other hand, under the assumption of uniform rescaling under external pressure, it may be proved easily that the compression modulus ${\cal K}$ is directly given by $E^{(2)}$ divided by the baryon number (see next section), 
\be \label{stiff-compr}
{\cal K} =(E^{(2)}/B).
\ee
It is a simple exercise to determine $E^{(2)}$ for a generalized Skyrme model which receives contributions from all four terms ${\cal L}_i, i=0,2,4,6$.
We call the corresponding static energies $E_i$ and assume for simplicity that $B=1$ (higher baryon number skyrmions should be approximately $B$ times the results below). Then the total energy is
\be \label{E-tot}
E = E_6 + E_4 + E_2 + E_0 \equiv E_N
\ee
where $E_N$ denotes the energy (mass) of a nucleon. This is not entirely correct, because the nucleon receives further small corrections, e.g., from spin and isospin excitations, but these small corrections are unlikely to be significant for the estimates we are interested in here. For the energy of a rescaled skyrmion solution we get
\be
E(\Lambda ) \equiv E[U(\Lambda \vec r)] = \Lambda^3 E_6 + \Lambda E_4 + \Lambda^{-1} E_2 + \Lambda^{-3} E_0
\ee
and for the second derivative
\be
 E''(\Lambda)= 6\Lambda E_6 + 2 \Lambda^{-3} E_2 + 12 \Lambda^{-5}E_0.
\ee
Using, in addition, the Derrick condition for a solution
\be
 E'(1) = 3E_6 + E_4 - E_2 - 3E_0 \equiv 0
\ee
we get for $E^{(2)}$
\be
E^{(2)} \equiv  E''(1) =  
 E_N + 8(E_6 + E_0) .
\ee
In other words, $E^{(2)}$ is equal to the nucleon mass (the nuclear mass in general) in the model with only ${\cal L}_2$ and ${\cal L}_4$ present (the original Skyrme model), but increases if  ${\cal L}_0$, ${\cal L}_6$ are included. This is good for the Roper resonances, which require a larger value of $E^{(2)}$. On the other hand, if we accept Eq. (\ref{stiff-compr}) then this is obviously bad for the compression modulus. The mass of the nucleon is about $E_N \sim 940 \, {\rm MeV}$, whereas the compression modulus should be about ${\cal K} \sim 230 \, {\rm MeV}$ \cite{modulus}, so contributions from $E_6$ and $E_0$ only make things worse.  

Of course, the arguments above are only qualitative in nature and should be backed up by more detailed computations.
The calculation of the compression modulus for higher nuclei is a very complicated numerical problem which requires the computation of higher skyrmions \cite{skyrmions}, \cite{108}. For the original Skyrme model ${\cal L}_2 + {\cal L}_4$, however, recent results indicate an unacceptably big value of the compression modulus \cite{manton}, essentially confirming the simple arguments from above. Physically, this may be understood as a rather high stiffness of the original Skyrme model, which is possibly related to the crystal structure of Skyrme matter in the limit $B\rightarrow \infty$ \cite{klebanov}.

This crystalline behaviour represents a striking qualitative difference from the BPS Skyrme model. Indeed, the static energy functional of the BPS Skyrme model possesses infinitely many symmetries, among them the volume preserving diffeomorphisms (VPDs) on physical space, which are exactly the symmetries of an ideal liquid. As a consequence, deformations of the classical solitons which do not change their volume cost zero energy. This, of course, does not tell us much about the cost in energy for a deformation which does change the volume, like, e.g., the squeezing of a nucleus (a skyrmion) as a result of external pressure. The qualitative arguments above seem to indicate that this cost in energy is quite high, i.e., the``ideal liquid" provided by the BPS skyrmions is quite incompressible. In the explicit calculations below we shall find that this apparent paradox is resolved by the fact that the uniform rescaling $\vec r \to \Lambda \vec r$ is a very bad approximation for the true behaviour of a classical BPS skyrmion under external pressure. Taking this behaviour correctly into account leads, in fact, to zero compression modulus. This does not mean that it costs zero energy to squeeze a BPS skyrmion under external pressure, it just means that the (infinitesimally small) pressure used to squeeze the nucleon and the resulting small change in volume are not linearly related. 

The rest of the paper is organized as follows. In Section II we describe in detail the zero temperature thermodynamics of the BPS Skyrme model. We explain how to introduce pressure in an analytical way and calculate the volume, the equation of state and the energy of the corresponding skyrmions. We also determine their compressibility and discuss some concrete examples. In Section III we discuss the relevance of our results and explain how they may contribute to resolving the problems with the compression modulus of nuclear matter described by  (generalized) Skyrme models. Finally, in the appendix we prove that our analytical way of introducing the pressure may be generalized to a large class of models, among which the extreme (or BPS) limit of the baby Skyrme model may be found.

\section{$T=0$ Thermodynamics of the BPS Skyrme model}
In thermodynamics, the compressibility at fixed temperature $T$ (isothermal) and particle number $B$ (in our case, particle number equals baryon number $B$) is \cite{landau}
\be 
\kappa_{T,B} = -\frac{1}{V} \left( \frac{\partial V}{\partial P}\right)_{T,B} 
\ee
where $V$ is the volume of the substance and $P$ is the pressure.
This quantity is useful for us because it is defined in terms of global variables (volume $V$ and pressure $P$) which do not vary with position.
Another important quantity is the compression modulus. There exist several definitions \cite{jennings}, which are all equivalent, however, 
for systems with a constant baryon density like, e.g., the free fermi gas (see below). 
Again, we shall use a definition which only depends on global variables, namely
\be \label{comp-mod-T}
{\cal K}= \frac{9V^2}{B} \left( \frac{\partial^2 E}{\partial V^2}\right)_{T,B}.  
\ee
In the sequel, we are only interested in the case of zero temperature, $T=0$. The generalization of skyrmions to non-zero temperature is, in fact, a rather nontrivial problem, see, e.g., \cite{temp} for early attempts.  
At zero temperature, apparently we still have the three thermodynamic variables $P$, $V$ and $B$. For skyrmions, however, the baryon number $B$ is an interger-valued constant which depends only on the boundary conditions imposed on the skyrme field and {\em not} on the thermodynamic state. Specifically, it always holds exactly that $E\propto B$ and $V\propto B$. It is, therefore, more appropriate to treat $B$ as a constant and not as a thermodynamic variable. Further, the volume $V$ and the pressure $P$ are always related by an equation of state $f(P,V)=0$, see below. As a consequence, all thermodynamic functions only depend on {\em one} thermodynamic variable (in the present paper, for convenience we choose the pressure $P$), and the derivatives in the thermodynamic relations are, therefore, ordinary derivatives. The compressibility at zero temperature,  e.g., is
$\kappa = -(1/V) (dV/dP)$, and the compression modulus
\be \label{comp-mod}
{\cal K}= \frac{9V^2}{B} \frac{d^2 E}{d V^2} = \frac{9 V^2}{B} \left( \left( \frac{d V}{d P} \right)^{-2} \frac{d^2 E}{d P^2} -
\left( \frac{d V}{d P}\right)^{-3} \frac{d^2 V}{d P^2} \; \frac{d E}{d P} \right)
\ee
where the second expression is useful if both volume $V(P)$ and energy $E(P)$ are known functions of the pressure $P$, as holds true in our case, see below.  For the free fermi gas, it may be shown easily that $\kappa$ and ${\cal K}$ are related via
\be \label{K-kappa}
{\cal K} = \frac{9V}{B\kappa}.
\ee
With our definition of ${\cal K}$, a sufficient condition for the above relation to hold is just the standard thermodynamical relation $
P = -(\partial F/\partial V)_T$ (where $F$ is the free energy), which at zero temperature reads
\be
P=-\frac{dE}{dV} .
\ee
The above thermodynamic relation is satisfied in the BPS Skyrme model, as we shall see. This last statement is nontrivial, because we do not use a thermodynamic definition of the volume. Our volume is, instead, literally the total space volume occupied by certain topological soliton solutions, see below.

\subsection{Free Fermi gas}

The precise definition of the compression modulus depends on the assumptions made for nuclear matter \cite{jennings}, but the simplest standard definition assumes that nuclear matter is, in a first approximation, just a free Fermi gas of nucleons. In this approximation, the nuclear matter density (baryon density)
\be
\rho = \frac{B}{V}
\ee
is assumed to be spatially constant, which allows one to rewrite the isothermal compressibility as (see, e.g., \cite{preston})
\be
\kappa =   \frac{1}{\rho}\frac{\partial \rho}{\partial P}_{T=0,B}.
\ee
The Fermi momentum for $N $ fermions in a volume $V$, and with degeneracy $D$ (i.e., $D$ fermions may occupy each energy state) is $p_F = \hbar (6\pi^2 N/(DV))^\frac{1}{3}$. In our case $D=4$ (two nucleon species, and two spin degrees of freedom) and $N=B$, so
\be
p_F = \hbar \left( \frac{3}{2}\pi^2 \rho \right)^\frac{1}{3} ,
\ee
and the total kinetic energy due to the exclusion principle is 
\be \label{deg-en}
E_T = \frac{3}{5}BE_F \; , \quad E_F = \frac{p_F^2}{2m_N} = \frac{\hbar^2}{2m_N} \left( \frac{3}{2} \pi^2 \rho \right)^\frac{2}{3} 
\ee
where $m_N$ is the nucleon mass.
Further, the Fermi pressure is
\be
P = \frac{2}{3}\frac{E_T}{V} = \frac{1}{5}\frac{\hbar^2}{m_N} \left( \frac{3}{2} \pi^2 \right)^\frac{2}{3} \rho^\frac{5}{3} ,
\ee
leading to the equation of state 
\be
P = c_{\rm Fg} V^{-\frac{5}{3}} \; , \quad c_{\rm Fg} \equiv \frac{1}{5}\frac{\hbar^2}{m_N} \left( \frac{3}{2} \pi^2 \right)^\frac{2}{3} B^\frac{5}{3}.
\ee
Then, the compression modulus of nuclear matter is defined as
\be
{\cal K} = \frac{1}{p_F^2} \frac{\partial}{\partial p_F} \left( p_F^4 \frac{\partial }{\partial p_F} \frac{E_T}{B} \right) 
= 9 \frac{\partial}{\partial \rho } \left( \rho^2 \frac{\partial}{\partial \rho} \frac{E_T}{B} \right)
\ee
where the second equality follows easily from the definitions above. With these definitions, it may also be shown at once that for the free Fermi gas the compression modulus and the isothermal compressibility are related via (\ref{K-kappa}).
We may also demonstrate easily that, if we use the Skyrme energy (\ref{E-tot}) instead of the degeneracy energy (\ref{deg-en}), then under the same assumption of constant baryon density, the compression modulus is precisely given by (\ref{stiff-compr}). Indeed, we just assume that the density $\rho$ is varied by a scale transformation
\be
\vec r \to \Lambda \vec r \quad \Rightarrow \quad \rho = \Lambda^{-3} \rho_0
\ee
where $\rho_0$ is the constant initial value. With
\be
 d\rho = -3 \rho_0 \Lambda^{-4} d\Lambda
\ee
we get
\be
{\cal K} = \frac{9}{B} \partial_\rho (\rho^2 \partial_\rho E ) = \frac{1}{B} (\Lambda^2 \partial_\Lambda^2 E - 2 \Lambda \partial_\Lambda E)
\ee
and at the equilibrium point $\rho =\rho_0$, i.e., $\Lambda =1$, where $\partial_\Lambda E|_{\Lambda =1}=0$, 
\be
{\cal K} =  (1/B) (\Lambda^2 \partial_\Lambda^2 E)_{\Lambda =1} =(E^{(2)}/B). 
\ee

\subsection{Pressure in the BPS Skyrme model} \label{sec:IIB}

One first difference between the BPS Skyrme model and the simple compressibility calculations of the previous section is given by the fact that the baryon density ${\cal B}_0$ in the BPS Skyrme model is, in general, not constant in space. And we shall see that the different thermodynamical properties may be partially attributed to this difference. Concretely, there exists a certain limit of the BPS Skyrme model (a limiting potential) for which the resulting baryon density {\em is} constant, ${\cal B}_0 =\rho =$ const. Precisely for this limiting case, the simple calculations of the previous section turn out to be completely correct. 

A second, and even more important difference is provided by the BPS nature of the static solutions of the BPS Skyrme model. The relevant property here is the fact that BPS solutions have their pressure identically equal to zero (BPS equations are, for that reason, frequently called ``zero pressure conditions" \cite{bazeia}).  This zero pressure condition allows matter described by the BPS Skyrmion solution to react in a nonlinear way to an infinitesimal external pressure acting on it (the induced change in volume is not proportional to the exerted infinitesimal pressure). 

It is convenient to introduce two new non-negative coupling constants $\mu$ and $\lambda$, so that the static energy density of the BPS Skyrme model is
\be
{\cal E} = \mu^2 {\cal U} + \lambda^2 \pi^4 {\cal B}_0^2
\ee  
and the BPS equation is
\be \label{BPS-eq}
\lambda \pi^2 {\cal B}_0 = \pm \mu \sqrt{\cal U}.
\ee 
On the other hand, the energy-momentum tensor $T^{\mu\nu}$ (and, therefore, the pressure) may be easily computed by introducing a general Lorentzian metric in the lagrangian and by varying the action w.r.t. the metric,
\be
T^{\mu\nu} = -\frac{2}{\sqrt{|g|}} \frac{\delta}{\delta g_{\mu\nu}} \int d^4 x \sqrt{|g|} {\cal L}_{06}
\ee
where $g = {\rm det} \, g_{\mu\nu}$, and the correct expression for the lagrangian for a general metric is
\be
{\cal L}_{06} = -\lambda^2 \pi^4 |g|^{-1} g_{\mu\nu}{\cal B}^\mu {\cal B}^\nu - \mu^2 {\cal U}.
\ee 
For static configurations, where only ${\cal B}_0$ is nonzero, the resulting energy-momentum tensor in Minkowski space is
\bea
T^{00} &=& \lambda^2 \pi^4 {\cal B}_0^2  + \mu^2 {\cal U} \; \equiv \;  {\cal E} \nonumber \\
T^{ij} &=& \delta^{ij} \left(\lambda^2 \pi^4 {\cal B}_0^2 - \mu^2 {\cal U} \right)  \; \equiv \;  \delta^{ij} {\cal P}
\eea
(where ${\cal P}$ is the pressure), and the conservation equations $\partial_\mu T^{\mu\nu} =0$ reduce to
\be \label{const-pr}
{\cal P} = P = {\rm const.},
\ee
so any static solution must have constant pressure. 
First of all, let us remark that this is the energy-momentum tensor of a perfect fluid, which, together with the infinitely many symmetries of the model (the volume-preserving diffeomorphisms) further strengthens the case for its interpretation as a field-theoretic realization of the liquid droplet model of nuclei. 
Next, we observe that the BPS equation (\ref{BPS-eq}) is just the zero pressure condition ${\cal P}=0$.
The constant pressure condition (\ref{const-pr}) is, in fact, completely equivalent to the static field equations. In other words, the static field equations always have a first integral and the pressure is the corresponding integration constant. The only difference between BPS and non-BPS solutions is the (zero or nonzero) value of this integration constant. To prove this, and for later use, we now introduce some notation. Parametrizing the SU(2) Skyrme field like
\be
U= \cos \xi +i \sin \xi \; \vec n \cdot \vec \tau \; , \quad \vec n^2 =1,
\ee
(where $\vec \tau$ are the Pauli matrices), and  
\be
\vec n = (\sin \chi \cos \Phi ,\sin \chi \sin \Phi , \cos \chi ) ,
\ee
the baryon density ${\cal B}_0$ is
\be
{\cal B}_0 = \frac{1}{2\pi^2 }\sin^2 \xi \sin \chi \epsilon^{ijk} \xi_i \chi_j \Phi_k ,
\ee
or, with the notation $\xi^1 = \xi , \xi^2 = \chi, \xi^3 = \Phi$,
\be
{\cal B}_0 = \frac{1}{2\pi^2} M(\xi_a) \epsilon^{ijk}\xi^1_i \xi^2_j \xi^3_k
\ee
where $M(\xi^a)$ is the volume element of the target space $\mathbb{S}^3$. With the help of the algebraic identity
\be
\left( \partial_j \frac{\partial}{\partial \xi^a_j}  - \frac{\partial}{\partial \xi^a} \right) {\cal B}_ 0 = 0
\ee
which may be proved easily \cite{ferreira} (and follows immediately from the fact that ${\cal B}_0$ is a topological density whose Euler-Lagrange variation is identically zero) we find for the Euler-Lagrange variation of the energy density
\be
\left( \partial_j \frac{\partial}{\partial \xi^a_j} - \frac{\partial}{\partial \xi^a} \right) {\cal E} = 2\lambda^2 \pi^4 ( \partial_j {\cal B}_0) \frac{\partial}{\partial \xi^a_j} {\cal B}_0 - \mu^2 \frac{\partial}{\partial \xi^a} {\cal U} \equiv 0 .
\ee
Multiplying the above expression by $\xi^a_k$, summing over $a$ and using the further algebraic identity
\be
\sum_a \xi_k^a \frac{\partial}{\partial \xi^a_j}{\cal B}_0 = \delta_{jk} {\cal B}_0
\ee
leads to the equation
\be
2\lambda^2 \pi^4 (\partial_k {\cal B}_0){\cal B}_0 - \mu^2 \partial_k {\cal U} =0
\ee
which trivially integrates to
\be
\lambda^2 \pi^4 {\cal B}_0^2 - \mu^2 {\cal U} = {\rm const},
\ee
i.e., to the constant pressure condition (\ref{const-pr}), as announced. 

In fact, the observation that fields of constant pressure automatically
satisfy the static field equation holds true for a large class of models
generalizing the BPS Skyrme model. To formulate this precisely, let $(M,g)$,
$(N,h)$ be oriented riemannian $n$-manifolds with volume forms $\vol_M$, $\vol_N$ respectively, $V:N\ra [0,\infty)$ be a smooth potential,
and define the energy of a field $\phi:M\ra N$ to be
\[
E(\phi)=\int_M(\frac12|\phi^*\vol_N|^2+V(\phi))\vol_M.\]
We obtain the BPS Skyrme model by choosing $M=\R^3$, $N=SU(2)=S^3$ with the round 
metric of unit radius, $V=2\mu^2\lambda^{-2}{\cal U}$, and 
$\phi:x\mapsto U(x)$. Note that
 this family also includes the extreme
baby Skyrme model ($n=2$) and the general nonlinear Klein-Gordon model
($n=1$).
As usual, we define the pressure of a field to be (minus) the component of
its 
stress tensor in the direction of $g$, 
which for these models is \cite{jayspesut}
\[
{\cal P}(\phi)=\frac12|\phi^*\vol_M|^2-V(\phi).
\]
In this level of generality, we have the following:

\begin{prop} \label{yuko}
 Let $\phi:M\ra N$ have constant pressure
${\cal P}\geq 0$ in some region $\Omega\subseteq M$. Then $\phi$ satisfies 
the Euler-Lagrange equation for the energy functional $E$ on $\Omega$.
\end{prop}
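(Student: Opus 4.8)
The plan is to exploit the fact that, for these volume-form models, the stress tensor is isotropic (a perfect fluid), so that its divergence is governed by the single scalar $\cal P$, and to compare this with the universal identity relating the divergence of the stress tensor to the Euler--Lagrange operator. Write $\phi^*\vol_N=\rho\,\vol_M$ for the scalar topological density $\rho=\star\,\phi^*\vol_N$, so that ${\cal E}=\frac12\rho^2+V(\phi)$ and ${\cal P}=\frac12\rho^2-V(\phi)$. A one-line computation (which I would either cite to \cite{jayspesut} or reproduce) shows that for these Lagrangians the stress tensor is exactly $S={\cal P}\,g$: the derivative of ${\cal E}$ with respect to $\partial_i\phi^a$ contracts with $\partial_j\phi^a$ to give $\rho^2\delta^i_j$, whence $S_{ij}=(\frac12\rho^2-V)g_{ij}$. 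Since $g$ is covariantly constant, $\nabla^iS_{ij}=\partial_j{\cal P}$, while the canonical divergence identity gives $\nabla^iS_{ij}=\langle\tau,\partial_j\phi\rangle_h$, where $\tau$ is the Euler--Lagrange field (a section of $\phi^*TN$). Thus, invariantly,
\[
d{\cal P}=\langle\tau,d\phi(\cdot)\rangle_h ,
\]
which is precisely the coordinate computation already carried out above for the BPS Skyrme model, now read off intrinsically.

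Given constant pressure on $\Omega$, the left-hand side vanishes, so $\tau$ must be $h$-orthogonal to the image of $d\phi_x$ at every $x\in\Omega$. On the open set $U=\{\rho\neq0\}$ the map $d\phi_x$ is an isomorphism (because $\rho$ is, up to positive frame factors, $\det d\phi$), hence surjective, so orthogonality to its image forces $\tau=0$: the Euler--Lagrange equation holds throughout $U$.

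It remains to treat the degenerate set $\{\rho=0\}$, and this is exactly where the hypothesis ${\cal P}\geq0$ enters decisively. Wherever $\rho=0$ one has ${\cal P}=-V(\phi)\leq0$; since ${\cal P}$ is a nonnegative constant, either ${\cal P}>0$, in which case $\rho^2=2({\cal P}+V)\geq2{\cal P}>0$ and $\{\rho=0\}$ is empty (finishing the proof), or ${\cal P}=0$ and $\phi$ takes values in $V^{-1}(0)$, i.e.\ at minima of $V$, throughout $\{\rho=0\}$. In the latter case consider the open set $\mathrm{int}\{\rho=0\}$: there both $\rho$ and $d\rho$ vanish, and every term of the kinetic part of $\tau$ carries a factor of $\rho$ or of $d\rho$, so $\tau^{\rm kin}=0$; moreover $\phi$ lands in $V^{-1}(0)=\{V=\min\}$, so $\grad V\circ\phi=0$ and the potential part $\tau^{\rm pot}$ also vanishes. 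Hence $\tau=0$ on $\mathrm{int}\{\rho=0\}$ as well. The open set $U\cup\mathrm{int}\{\rho=0\}$ is dense in $\Omega$, its complement being the topological boundary $\partial\{\rho=0\}$, which has empty interior; since $\tau$ is continuous for $C^2$ fields, it therefore vanishes on all of $\Omega$.

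I expect the genuine obstacle to be the corank-one points of the zero locus $\{\rho=0\}$: there the pointwise orthogonality argument fails, because $\mathrm{im}\,d\phi$ is a proper subspace, yet one cannot show $\tau^{\rm kin}$ vanishes by algebra alone. The resolution is structural rather than computational: one combines the ${\cal P}\geq0$ hypothesis, which pins $\phi$ to the vacuum exactly where $d\phi$ degenerates and so disposes of $\tau^{\rm pot}$, with the density-plus-continuity argument above, which disposes of $\tau^{\rm kin}$ on the nowhere-dense bad set. A secondary point needing care is the intrinsic derivation of $d{\cal P}=\langle\tau,d\phi\rangle$, i.e.\ the verification that $S={\cal P}\,g$; this is routine but is the mechanism that drives the whole argument, so I would state it cleanly before the case analysis.
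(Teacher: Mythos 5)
Your proposal is correct, and in the main case ${\cal P}>0$ it coincides with the paper's argument: the identity $(\d{\cal P})(Y)=h(\tau(\phi),\d\phi(Y))$ forces $\tau(\phi)(x)$ to be $h$-orthogonal to $\d\phi_x(T_xM)$, which equals $T_{\phi(x)}N$ because ${\cal P}>0$ excludes critical points of $\phi$. You reach that identity by a different route --- isotropy of the stress tensor, $S={\cal P}g$, combined with the canonical off-shell divergence identity $\nabla^iS_{ij}=h(\tau,\partial_j\phi)$ --- whereas the paper proves it directly as a differential-form computation (its second Lemma, using $\delta(\phi^*\vol_N)=-*\d F_\phi$, where $F_\phi$ is your $\rho$); these are the same fact, and your route is sound once the determinant (Euler) identity behind $S={\cal P}g$ is written out, as you note. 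The genuine divergence is the degenerate case ${\cal P}=0$: the paper dismisses it in one line (``$\phi$ is BPS, and hence automatically solves the static field equation''), i.e.\ it appeals to the standard Bogomolny/energy-bound argument, while you prove it from scratch --- $\tau=0$ on $\{\rho\neq0\}$ by the same orthogonality argument, $\tau=0$ on $\mathrm{int}\,\{\rho=0\}$ because there $\d\rho\equiv0$ kills the kinetic part $-\sharp_h\mu(\phi)$ while ${\cal P}=0$ pins $\phi$ to $V^{-1}(0)\subseteq\{\grad V=0\}$ and kills the potential part, and then density of $\{\rho\neq0\}\cup\mathrm{int}\,\{\rho=0\}$ plus continuity of $\tau$ (for $C^2$ fields) handles the nowhere-dense remainder $\partial\{\rho=0\}$. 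Your version is longer but more self-contained: it needs neither the homotopy invariance of $\int_M\phi^*(\sqrt{2V}\,\vol_N)$ under compactly supported variations nor a coherent choice of sign in $F_\phi=\pm\sqrt{2V\circ\phi}$ across components of $\Omega$, subtleties which the ``BPS hence Euler--Lagrange'' shortcut leaves implicit. Both treatments are valid.
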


\noindent The proof, which uses a geometric formulation of the variational
calculus for $E$, is presented in an appendix.

\subsection{Equation of state and compressibility}

The solution of the problem we want to study now consists of solving the static field equation for nonzero pressure and determining the volume of the corresponding solution. First, let us define the class of potentials ${\cal U}$ we want to consider. The potentials depend on $U$ only via ${\rm tr} \, U$, i.e., via $\xi$, and take their unique vacuum value at the north pole, i.e., at $\xi = 0$. Further, we assume that the potentials have a power-like behaviour near the vacuum,
\be \label{near-vac}
\lim_{\xi \to 0} \;  {\cal U}(\xi ) \sim \xi^\alpha \; , \quad \alpha >0 .
\ee 
It is one of the distinguished properties of the BPS Skyrme model that for $0<\alpha <6$ the BPS skyrmion solutions are compactons which differ from their vacuum values only in a bounded region of space. As a consequence, these compacton solutions have finite and well-defined volumes. 
Both the compact BPS skyrmion solutions themselves and their energy and baryon densities continuously join their vacuum values at the boundary. We now want to study  how these compactons change under the influence of external pressure. For nonzero pressure, the energy and baryon number densities at the boundary will no longer be continuous, because by assumption some external forces act on the compactons producing the nonzero pressure. 
In a next step, we assume the spherically symmetric ansatz $\xi = \xi (r)$, $\chi = \chi (\theta)$ and $\Phi = B\phi$ where $(r,\theta, \phi)$ are spherical polar coordinates. Inserting this ansatz into the static field equation (constant pressure equation) and insisting on the correct boundary conditions for skyrmions leads to $\chi = \theta $ and to the equation for $\xi$
\be \label{rad-eq}
\frac{|B|\lambda}{2 r^2} \sin^2 \xi \xi_r = - \mu \sqrt{{\cal U} + \tilde P}
\ee
where $\tilde P = (P/\mu^2 )$, and we chose the minus sign in front of the root because we want to impose the boundary conditions 
\be
\xi (r=0) = \pi \, , \quad \xi (r=\infty )=0 ,
\ee  
leading to a Skyrme field configuration with baryon number $B$. We remark that for our purposes the restriction to the spherically symmetric ansatz is not very restrictive. The reason is that the static energy functional has the base space VPDs as symmetries, so to any spherically symmetric solution there exist infinitely many more solutions with arbitrary shapes and with exactly the same pressure and volume \cite{fosco}, leading to the same thermodynamic relations.  

We introduce the new coordinate
\be
z = \frac{2\mu}{3|B| \lambda}r^3 \quad \Rightarrow \quad r^2 dr = \frac{|B|\lambda}{2\mu} dz
\ee
so that (\ref{rad-eq}) becomes an autonomous ODE,
\be \label{z-xi-ode}
\sin^2 \xi \xi_z = - \sqrt{{\cal U} + \tilde P} .
\ee
Next, we introduce a new coordinate
\be
\eta = \frac{1}{2}\left( \xi - \frac{1}{2} \sin 2\xi \right) \quad \Rightarrow \quad d\eta = \sin^2 \xi d\xi
\ee
and the above equation becomes
\be
\eta_z = - \sqrt{{\cal U} + \tilde P} 
\ee
with boundary conditions
\be
\eta (z=0) = \frac{\pi}{2} \; , \quad \eta (z=\infty)=0 .
\ee
Note that $\eta$ is chosen so that the volume form on $\mathbb{S}^3$ is $\vol_{\mathbb{S}^3} = d\eta \wedge \vol_{\mathbb{S}^2}$.

For specific examples it may be useful to treat the potential ${\cal U}$ as a function of the new coordinate $\eta$, because the resulting ODE is simpler. Near the vacuum, $\eta \sim \xi^3$,  
therefore ${\cal U}(\eta) \sim \eta^\beta$ translates into ${\cal U}(\xi ) \sim \xi^{3\beta}$. 
At the moment, however, we are more interested in generic thermodynamic properties which hold for rather general potentials.
First of all, the volume for general non-negative pressure may be found by integrating  
  Eq. (\ref{z-xi-ode}), which may be re-expressed like
\be
\frac{\sin^2 \xi}{\sqrt{{\cal U} + \tilde P}} d\xi = - dz.
\ee
Integrating both variables over their respective ranges, we get the following integrals
\be
\int_0^\pi \frac{\sin^2 \xi d\xi}{\sqrt{{\cal U} + \tilde P}} = \int_0^Z dz
\ee
or
\be \label{tilde-V}
\tilde V (\tilde P) \equiv  Z(\tilde P)=    \int_0^\pi \frac{\sin^2 \xi d\xi}{\sqrt{{\cal U} + \tilde P}}  = \int_0^{\frac{\pi}{2}} \frac{d\eta}{\sqrt{{\cal U} + \tilde P}}.
\ee
and therefore the volume 
\be \label{V-eq}
V(P) = V(\mu^2 \tilde P) = 2\pi |B|\frac{\lambda}{\mu} \tilde V (\tilde P) .
\ee 
This is the general equation of state of our models. For more explicit expressions, we have to choose specific potentials.
It follows easily from the above expression that for positive pressure $\tilde P >0$, $Z(\tilde P)$ and, therefore, the volume of the Skyrmion,
is finite for arbitrary potentials of the type considered. For BPS Skyrmions ($\tilde P =0$), on the other hand, the volume is finite (the Skyrmion is a compacton) for $0\le \alpha <6$, but infinite for $\alpha \ge 6$.  
For later convenience we also calculate
\be
\frac{d \tilde V}{d \tilde P} = -\frac{1}{2} \int_0^\pi \frac{d\xi \sin^2 \xi}{({\cal U} + \tilde P)^\frac{3}{2}} = 
-\frac{1}{2} \int_0^{\frac{\pi}{2}} \frac{d\eta}{({\cal U} + \tilde P)^\frac{3}{2}} 
\ee
and
\be
\frac{d^2 \tilde V}{d \tilde P^2} = \frac{3}{4} \int_0^\pi \frac{d\xi \sin^2 \xi}{({\cal U} + \tilde P)^\frac{5}{2}} = 
\frac{3}{4} \int_0^{\frac{\pi}{2}} \frac{d\eta}{({\cal U} + \tilde P)^\frac{5}{2}} .
\ee
For the energy we get with the help of the constant pressure equation
\be
E = \int d^3 x \left( \lambda^2 \pi^4 {\cal B}_0^2 + \mu^2 {\cal U} \right) = \int d^3 x \left( 2\mu^2 {\cal U} + P \right) = 4\pi \mu^2 \int dr r^2 \left( 2{\cal U} + \tilde P \right)
\ee
where we used the axially symmetric ansatz in the last step. Using the above variables, we further get
\be
 E(P) = E(\mu^2 \tilde P) = 2\pi \lambda \mu |B| \tilde E(\tilde P)
\ee
where
\be
\tilde E (\tilde P) = \int_0^Z dz \left( 2{\cal U} + \tilde P\right) = \int_0^\pi d\xi \sin^2 \xi \frac{2{\cal U} + \tilde P}{\sqrt{{\cal U} + \tilde P}}
= \int_0^\frac{\pi}{2} d\eta \frac{2{\cal U} + \tilde P}{\sqrt{{\cal U} + \tilde P}}.
\ee
Further,
\be
\frac{d \tilde E}{d \tilde P} = \frac{\tilde P}{2} \int_0^\pi \frac{d\xi \sin^2 \xi}{({\cal U} + \tilde P)^\frac{3}{2}} =
\frac{\tilde P}{2} \int_0^\frac{\pi}{2} \frac{d\eta}{({\cal U} + \tilde P)^\frac{3}{2}}
\ee
\be
\frac{d^2 \tilde E}{d \tilde P^2} = 
\frac{1}{2} \int_0^\pi d\xi \sin^2 \xi \frac{{\cal U} -\frac{1}{2} \tilde P}{({\cal U} + \tilde P)^\frac{5}{2}} =
\frac{1}{2} \int_0^\frac{\pi}{2} d\eta \frac{{\cal U} -\frac{1}{2} \tilde P}{({\cal U} + \tilde P)^\frac{5}{2}} .
\ee
From these results it follows immediately that the thermodynamic relation $P=-(d E/d V)$ holds in the BPS Skyrme model.
Indeed, obviously
\be
\frac{d \tilde E}{d \tilde V} = \frac{\frac{d \tilde E}{d \tilde P}}{\frac{d \tilde V}{d \tilde P}} = -\tilde P
\ee
and, further
\be
\frac{d E}{d V} = \frac{\frac{d E}{d \tilde P}}{\frac{d V}{d \tilde P}} = \mu^2 
\frac{\frac{d \tilde E}{d \tilde P}}{\frac{d \tilde V}{d \tilde P}} = -P.
\ee
Remark: it is interesting to notice that it is precisely the compacton volume which exactly saturates the thermodynamic relation, although it is not directly defined as a thermodynamical quantity. Other measures for the volume which are, e.g., constructed from different charge radii $\langle r \rangle_\gamma $,
\be
\langle r \rangle_\gamma = \left( \int d^3 x r^{3\gamma} {\cal B}_0 \right)^\frac{1}{3\gamma}
\ee
via $V_\gamma =(4\pi/3) \langle r \rangle_\gamma^3$ lead to different results, in general.  In this sense, the compacton volume is singled out as the correct definition of the volume from a thermodynamical perspective.

For the compressibility at equilibrium $\tilde P=0$ we get
\be
\left. \kappa \sim -\frac{1}{Z} \frac{\partial Z}{\partial \tilde P}\right|_{\tilde P =0} = \frac{1}{2Z(0)} \int_0^\pi {\cal U}^{-\frac{3}{2}} 
\sin^2 \xi d\xi .
\ee
Near the vacuum, the integrand on the right hand side behaves like $\xi^{2-\frac{3\alpha}{2}}$. The integral is, therefore, finite for $\alpha <2$ but infinite for $\alpha \ge 2$. It follows that the compression modulus, which is proportional to the inverse of $\kappa$, is zero for $\alpha \ge 2$, as announced.

\subsection{Examples}

At this point we shall treat the potential ${\cal U}$ as a function of the new coordinate $\eta$ for reasons of simplicity, as announced.
Here, instead of calculating the compacton volumes directly from Eq. (\ref{tilde-V}), we shall first calculate the solutions $\eta (z)$ and then determine the compacton volumes from the boundary conditions, because the solutions will be useful on their own.  Remember that near the vacuum, $\eta \sim \xi^3$,  
therefore ${\cal U}(\eta) \sim \eta^\beta$ translates into ${\cal U}(\xi ) \sim \xi^{3\beta}$. For concreteness, we choose the simple potentials
\be
{\cal U} = \eta^\beta
\ee
for different values of $\beta$ as examples. Note that such potentials fail to be differentiable at the anti-vacuum, $\eta = (\pi /2)$. This is unlikely to be significant for our purposes, however. 
\begin{itemize}
\item[i)] $\beta =1$. This corresponds to a cubic potential near the vacuum. The BPS equation (for zero pressure) is
\be
\eta_z = - \eta^\frac{1}{2}
\ee
with the solution ($z_0$ is an integration constant)
\be
\eta = \frac{1}{4} (z_0 - z)^2.
\ee
The condition $\eta(0)=(\pi/2)$ leads to $z_0 = \sqrt{\pi/8}$, and the position $Z$ where $\eta $ reaches its vacuum value, $\eta(z)=0$ for $z\ge Z$ is $Z=z_0 = \sqrt{\pi/8}$. Finally, the volume of the BPS compacton is
\be
V = \frac{4\pi}{3}R^3 = \frac{2\pi |B|\lambda}{\mu} Z =  \frac{2\pi |B|\lambda}{\mu} \sqrt{\frac{\pi}{8}} .
\ee

For nonzero pressure the equation is
\be
\eta_z = -\sqrt{\eta + \tilde P}
\ee
with the solution
\be
\eta = \frac{1}{4} (z_0 - z)^2 - \tilde P
\ee
where $\eta (0)=(\pi/2)$ leads to 
\be 
z_0 = 2 \sqrt{\frac{\pi}{2} + \tilde P},
\ee
whereas $\eta (Z)=0$ leads to
\be \label{tilde-V-beta1}
\tilde V (\tilde P) =Z = z_0 - 2\sqrt{\tilde P} = 2 \left( \sqrt{\frac{\pi}{2} + \tilde P} - \sqrt{\tilde P} \right) .
\ee
The resulting equation of state is
\be
\tilde P = \frac{1}{16 \tilde V^2}\left( 2\pi - \tilde V^2 \right)^2 .
\ee
In agreement with the general discussion of the preceding section, the compressibility is infinite due to
 the presence of the second term proportional to $\sqrt{\tilde P}$ in Eq. (\ref{tilde-V-beta1}),
\be
\kappa = - \frac{1}{V} \frac{d V}{d P}\Bigg|_{P=P_0} \sim \; - \frac{1}{Z} \frac{d Z}{d \tilde P}\Bigg|_{\tilde P =0} = \infty
\ee
(here $P_0$ generically is the equilibrium pressure, which in our case is $P_0=0$). 

Other possible measures for the volume like, e.g., the cubes of average baryon radii of the type
\be
\langle V\rangle_\gamma = \frac{4\pi}{3} \Big| \int d^3 x {\cal B}_0 r^{3\gamma} \Big|^\frac{1}{\gamma}
\ee
which, in our variables, read
\be \label{z-gamma}
\langle V\rangle_\gamma \sim \langle z \rangle_\gamma \equiv 
\Big| \int dz \eta_z z^\gamma \Big|^\frac{1}{\gamma}
\ee
lead to completely different compressibility results. For the simplest case, $\gamma =1$, e.g., we easily calculate
\be
\langle z \rangle_1 =  \Big| \int _0^Z dz \eta_z z \Big| = \frac{2}{3} \left( \left( \frac{\pi}{2} - 2 \tilde P \right)  \sqrt{\frac{\pi}{2} + \tilde P} + 2 \tilde P \sqrt{\tilde P} \right) ,
\ee
which leads to a finite compressibility. These volume definitions, however, should not be used because they do not obey the thermodynamic relation $P=-(d E/d V)$.

\item[ii)] $\beta =(2/3)$. This corresponds to a quadratic (pion mass) potential near the vacuum. The BPS equation (for zero pressure) is
\be
\eta_z = - \eta^\frac{1}{3}
\ee
with the solution ($z_0$ is an integration constant)
\be
\eta = \left( \frac{2}{3} (z_0 - z) \right)^\frac{3}{2}.
\ee
Further, $\eta (0)=(\pi /2)$ leads to
\be
z_0 = \frac{3}{2} \left( \frac{\pi}{2} \right)^\frac{2}{3}
\ee
and the compacton reaches its vacuum value at $Z=z_0$. The equation for nonzero pressure
\be
\eta_z = - \sqrt{\eta^\frac{2}{3} + \tilde P}
\ee
has the implicit solution
\be
\frac{3}{2} \left[ \sqrt{\eta^\frac{2}{3} + \tilde P} \; \eta^\frac{1}{3} - \tilde P \ln \left( 2\left( \sqrt{\eta^\frac{2}{3} + \tilde P} + \eta^\frac{1}{3} \right) \right) \right] = z_0 - z.
\ee
The condition $\eta (0)=(\pi/2)$ leads to
\be
z_0 = \frac{3}{2} \left[ \sqrt{\left( \frac{\pi}{2} \right)^\frac{2}{3} + \tilde P} \; \left( \frac{\pi}{2} \right)^\frac{1}{3} - \tilde P \ln \left( 2\left( \sqrt{\left( \frac{\pi}{2} \right)^\frac{2}{3} + \tilde P} + \left( \frac{\pi}{2} \right)^\frac{1}{3} \right) \right) \right] 
\ee
and $\eta$ reaches its vacuum value $\eta =0$ at
\be
Z = z_0 + \frac{3}{2} (\ln 2) \tilde P + \frac{3}{4} \tilde P \ln \tilde P.
\ee
The volume is, therefore ($\tilde V \equiv Z$),
\be \label{vol-beta23}
\tilde V (\tilde P)  =
\frac{3}{2} \left[ \sqrt{\left( \frac{\pi}{2} \right)^\frac{2}{3} + \tilde P} \; \left( \frac{\pi}{2} \right)^\frac{1}{3} - \tilde P \ln \left( \sqrt{\left( \frac{\pi}{2} \right)^\frac{2}{3} + \tilde P} + \left( \frac{\pi}{2} \right)^\frac{1}{3}  \right) + \frac{1}{2}\tilde P \ln \tilde P\right] .
\ee
Here, the important term is the last one, $\sim \tilde P \ln \tilde P$, because it again leads to an infinite compressibility. This infinitesimal nonlinearity is the softest possible one, therefore we expect finite compressibilities for $\beta < (2/3)$, i.e., $\alpha < 2$, in agreement with the general findings of the previous section.

\item[iii)] $\beta =(1/3)$. This corresponds to a linear (V-shaped) potential near the vacuum. The equation for nonzero pressure is
\be
\eta_z = - \sqrt{\eta^\frac{1}{3} + \tilde P}
\ee
and has the implicit solution
\be
z_0 - z = \frac{2}{5} \sqrt{\eta^\frac{1}{3} + \tilde P} \left( 8 \tilde P^2 - 4 \tilde P \eta^\frac{1}{3} + 3 \eta^\frac{2}{3} \right) 
\ee
which leads to 
\be
z_0 = \frac{2}{5} \sqrt{ \left( \frac{\pi}{2} \right)^\frac{1}{3} + \tilde P} \left( 8 \tilde P^2 - 4 \tilde P \left( \frac{\pi}{2} \right)^\frac{1}{3} + 3 \left( \frac{\pi}{2} \right)^\frac{2}{3} \right) 
\ee
and
\be
Z = z_0 - \frac{2}{5} \tilde P^\frac{5}{2} ,
\ee
i.e., 
\be
\tilde V (\tilde P)= \frac{2}{5} \left( \sqrt{ \left( \frac{\pi}{2} \right)^\frac{1}{3} + \tilde P} \left( 8 \tilde P^2 - 4 \tilde P \left( \frac{\pi}{2} \right)^\frac{1}{3} + 3 \left( \frac{\pi}{2} \right)^\frac{2}{3} \right) - \tilde P^\frac{5}{2} \right) .
\ee
Here, the last term $\sim \tilde P^\frac{5}{2}$ does not contribute to the compressibility at $\tilde P =0$, therefore now the compressibility is finite,
\be
\frac{d Z}{d \tilde P} \Big|_{\tilde P =0} = \frac{d z_0}{d \tilde P}\Big|_{\tilde P =0} = -\sqrt{\frac{\pi}{2}}.
\ee

\item[iv)]
Limit $\beta \to 0$. The potential approaches the Heaviside function, ${\cal U}(\eta) =1$ for $\eta \in [(\pi/2),0)$, and ${\cal U}(\eta =0)=0$. 
The equation for nonzero pressure is
\be
\eta_z = - \sqrt{1+\tilde P} = {\rm const}.
\ee
In this specific case, the baryon density is constant inside the skyrmion. Further, the equation for nonzero pressure may be related to the BPS equation for zero pressure by a simple scale transformation $\vec r \to \vec r' = \Lambda \vec r$,
\be
\eta_{z'} = \Lambda^{-3} \eta_z = - \Lambda^{-3} \sqrt{1+\tilde P} = -1 \; , \quad \Lambda^3 = \sqrt{1 + \tilde P} .
\ee
In other words, the skyrmion for nonzero pressure may be inferred from the skyrmion for zero pressure by a simple, uniform rescaling. 
As a consequence, the simple thermodynamic analysis of the introduction applies to this case. The solution with the correct boundary conditions is 
\be
\eta (z) = \frac{\pi}{2} - \sqrt{1+\tilde P}\, z
\ee
and takes the vacuum value $\eta =0$ at 
\be
Z \equiv \tilde V (\tilde P)=\frac{\pi}{2\sqrt{1+\tilde P}},
\ee
leading to the e.o.s.
\be
\tilde P = \left( \frac{\pi}{2\tilde V}\right)^2 - 1.
\ee
Obviously, the compressibility in this case is finite. 

\item[v)]
At this point it is instructive to consider the case without potential, ${\cal U}=0$. Due to the Derrick theorem, the only acceptable zero pressure solution is the trivial vacuum solution $\eta =0$, but for nonzero pressure the equation
\be
\eta_z = - \sqrt{\tilde P} 
\ee
 has the simple solution
\be \eta = \frac{\pi}{2} - \sqrt{\tilde P} z
\ee
with volume 
\be
Z \equiv \tilde V (\tilde P)= \frac{\pi}{2\sqrt{\tilde P}}
\ee
and equation of state
\be
\tilde P = \left( \frac{\pi}{2\tilde V}\right) ^2 .
\ee
In this case, the compressibility is infinite.

\item[vi)]
$\beta =2$. In this case, the BPS skyrmion is no longer compact, but localized stronger than exponentially (concretely $\sim e^{-z}$). The equation for nonzero pressure is
\begin{equation}
\eta_z=-\sqrt{\eta^2+\tilde{P}}
\end{equation}
with the solution
\begin{equation}
 \eta = \sqrt{\tilde{P}}\sinh (z_0-z) .
\end{equation}
Again we impose the topologically non-trivial boundary conditions
$\eta(z=0)=(\pi /2)$ and  $\eta (z=Z)=0$,
where for nonzero pressure $Z$ turns out to be finite, 
\begin{equation}
Z=z_0, \;\;\; \sinh z_0 = \frac{\pi}{2\sqrt{\tilde{P}}} 
\end{equation}
or 
\begin{equation}
 Z\equiv \tilde V (\tilde P) = \mbox{sinh}^{-1}  \frac{\pi}{2\sqrt{\tilde{P}}} 
= \ln \left( \frac{\pi}{2\sqrt{\tilde P}} + \sqrt{\left( \frac{\pi}{2\sqrt{\tilde P}}\right)^2 + 1} \right) ,
\end{equation}
leading to the equation of state
\be
\tilde P = \left( \frac{\pi}{2\sinh \tilde V}\right)^2 .
\ee
Here, $Z$ is finite for nonzero pressure $P$ but tends to infinity in the limit of zero pressure. We may compute the compressibility as before,
\bea
 - \frac{1}{Z}  \frac{d Z}{d \tilde{P}}\Big|_{\tilde P =0} &=&
- \frac{1}{\mbox{sinh}^{-1} \frac{\pi}{2\sqrt{\tilde{P}}}} \frac{1}{\sqrt{1+ \left( \frac{\pi}{2\sqrt{\tilde{P}}} \right)^2}} \frac{-\pi}{4} \frac{1}{\tilde{P}^{3/2}} \Big|_{\tilde P =0}  \nonumber \\
&=&   \frac{1}{2}  \frac{1}{\mbox{sinh}^{-1} \frac{\pi}{2\sqrt{\tilde{P}}}} \frac{1}{\tilde{P}}\Big|_{\tilde P =0} = \infty .
\eea
As the BPS skyrmion is no longer compact, it might be interesting to calculate one of the cubes of the average baryon radii (\ref{z-gamma}), e.g., 
$\langle V\rangle_1 \sim \langle z \rangle_1$, although the resulting volumes are not the thermodynamic ones.
With (here $t=z_0 - z$)
\bea
\int_0^{z_0} \eta_z z dz  &=& \sqrt{\tilde{P}} \int_{z_0}^0 (z_0-t) \cosh t dt =
 \sqrt{\tilde{P}} \left( - z_0 \sinh z_0 +z_0 \sinh z_0 -\cosh z_0 +1 \right) \nonumber \\ &=&
 \sqrt{\tilde{P}} \left( -\cosh z_0 +1 \right) 
\eea
and $\int dz \eta_z =-(\pi /2) = -\sqrt{\tilde P} \sinh z_0$
we get
\begin{equation}
\langle z \rangle_1 =\frac{\cosh z_0 -1 }{\sinh z_0} =  \sqrt{1+\frac{1}{\sinh^2 z_0}} -  \frac{1}{\sinh z_0}
=\sqrt{1+\frac{4\tilde{P}}{\pi^2}} -  \frac{2\sqrt{\tilde{P}}}{\pi}
\end{equation}
and 
\begin{equation}
\left. -\frac{1}{\langle z\rangle_1 } \frac{d \langle z\rangle_1}{d \tilde{P}} \right|_{\tilde{P}=0} = \infty .
\end{equation}
The functional dependence of $\tilde V$ and $\langle z\rangle_1$ on $\tilde P$ is, again, completely different, $\langle z\rangle_1$ being very similar to the thermodynamic volume (compacton volume) for $\beta =1$ in this case.

\end{itemize}

\section{Discussion}

It was one main purpose of the present paper to calculate skyrmion solutions for nonzero external pressure and to determine the resulting thermodynamic properties, concretely the equation of state, the energy and the (isothermal) compressibility. In our specific calculations, we restricted to the BPS Skyrme model, because due to its integrability and BPS properties, all calculations can be done essentially in an analytic fashion. Indeed, all static solutions of the BPS Skyrme model have constant pressure, and the general static field equations are equivalent to (may be once integrated to) the constant pressure condition, where the pressure is the integration constant. Further, the BPS solutions correspond to stable zero pressure solutions, whereas solutions with non-zero pressure require the action of external pressure to be stabilized. 

First of all, let us remark that the same results continue to hold for a large class of generalizations of the BPS Skyrme model, as proved in section \ref{sec:IIB}. The thermodynamics of these models will, therefore, be similar and allow for an equivalent treatment. Among these models is the extreme (or BPS) limit of the baby Skyrme model, whose thermodynamic properties may be of direct physical relevance, because the baby Skyrme model has some applications to condensed matter systems.

In our explicit examples for the BPS Skyrme model, we used the potentials ${\cal U}(\eta) = \eta^\beta$ [where $\eta$ is related to ${\rm tr} \; U = 2\cos \xi$ via $\eta = (1/2)(\xi -\sin \xi \cos \xi )$] for some specific values of the parameter $\beta$, because of the resulting simple exact solutions, even for nonzero pressure. We were able to demonstrate, however, that both the (compact or non-compact) nature of the solutions and the resulting compressibilities are determined exclusively by the behaviour of the potentials near the vacuum. 

For compact solutions (compactons) we further found that, among
all possible volume definitions for a skyrmion, the compacton volume is singled out as especially ``physical" or ``natural" because it saturates the thermodynamic relation $P = -(d E/d V)$, although none of these three quantities is defined to obey this relation, at least not in an obvious way. More concretely, the energy and the pressure are related rather closely via the energy-momentum conservation, but there is no obvious close relation with the compacton volume. The deeper reason behind this fact is probably the BPS property of the BPS Skyrme model, although this should be investigated further.

For different potentials of the above family (i.e., for different asymptotic behaviour), we find the following compressibility results in the BPS Skyrme model. 
\begin{itemize}
\item[i)]
The constant pressure condition (\ref{const-pr}) leads to ${\cal B}_0 \sim \pm \sqrt{{\cal U} + \tilde P}$, so for non-constant potentials the baryon density cannot be constant, either. There exists, however, a limiting case $\beta \to 0$ with a potential which is constant and jumps to zero at the vacuum value $\eta =0$, leading to a constant baryon density which jumps to zero at the compacton boundary. In this limiting case, the skyrmion responds to external pressure via a simple uniform rescaling, and the standard thermodynamic arguments apply.
\item[ii)] For $0<\beta <(2/3)$, i.e., for an asymptotic behaviour ${\cal U} \sim \xi^\alpha$ with $0<\alpha <2$, the compressibility is still finite, but bigger than in the $\beta =0$ case (i.e., the skyrmions are more compressible than in the constant density limit).
\item[iii)] For $(2/3) \le \beta <2$ (i.e., for asymptotic behaviour $2\le \alpha <6$), the isothermal compressibility is infinite, corresponding to a  
zero compression modulus. On the other hand, for these parameter values the BPS skyrmions are still compactons with a well-defined volume.

\end{itemize}  
So we found that for potentials with an asymptotic behaviour about the vacuum which is at least quadratic, i.e., $\alpha \ge 2$, BPS skyrmions have infinite compressibility. Further, potentials with $\alpha <2$ are problematic (their second variation about the vacuum is infinite), therefore $\alpha \ge 2$ are the physically acceptable values, and the infinite compressibility at the equilibrium point $P=0$ is a rather generic result.

At this point, we want to add the following observation. Firstly, we did not worry much about the precise value of the baryon number $B$, because both the volumes and the energies are exactly linear in $B$, whereas the pressure does not depend on it. Secondly, we only considered the case of non-negative pressure $P\ge 0$ or, equivalently, the case where the thermodynamic or compacton volume is less than or equal to its equilibrium value $V_0 \equiv V (P=0)$. If we want to go beyond this case, then the difference between $B=1$ and large $B$ becomes essential. For $B=1$, no physically sensible solution for negative $P$ can be given. It follows immediately from the constant pressure equation for negative $P$ that the skyrmion can never reach its vacuum value where ${\cal U}(U)=0$, because this would lead to an imaginary baryon density. Solutions where the baryon density goes to zero for large radii may exist but still lead to infinite energy,
\be
E = \int d^3 x \left( \lambda^2 \pi^4 {\cal B}_0^2 + \mu^2 {\cal U}  \right) = \int d^3 x \left( 2 \lambda^2 \pi^4 {\cal B}_0^2 + |P| \right)
\ge \int d^3 x |P| = \infty .
\ee
For large baryon number $B$, however, and for potentials with $\alpha <6$, there exists a different possibility for states with $V>V_0$. For such potentials the equilibrium solutions for $P=0$ are compactons, therefore states consisting of a collection of non-overlapping compactons may be formed such that the additional available volume $\delta V = V-V_0$ is occupied by the empty space (vacuum) surrounding these non-overlapping compactons. The pressure of these configurations is obviously zero. In other words, for large $B$ the equilibrium volume $V_0$ defines a phase transition. For $V>V_0$, the system is in the state of an ideal gas of non-overlapping compactons at zero temperature, with zero pressure. For $V<V_0$, on the other hand, the system is in a kind of liquid phase with a rather nontrivial equation of state even at zero temperature.  In this picture, the equilibrium (compacton) volume $V_0$ corresponds to the condensation volume where all empty space surrounding the gas of "molecules" (compactons) has been expelled and the condensation to a liquid phase sets in. For illustrative purposes we plot the corresponding equation of state for the potential ${\cal U} = \eta^\frac{2}{3}$ (i.e., for the case of a pion-mass type potential with $\alpha =2$) in Fig. 1, showing both the liquid (for $0\le V \le  (3/2)(\pi /2)^\frac{2}{3} \simeq 2.027$) and the gaseous phase (for $V>  (3/2)(\pi /2)^\frac{2}{3} $). Here, the dimensionless expression (\ref{vol-beta23}) is used for the plot. We remark that qualitatively similar equation of state diagrams, specifically with the same phase transition, are also found in more conventional calculations of the nuclear equation of state at zero temperature, based on microscopic two- and three-body internuclear forces, see, e.g.,  Fig. 10 of Ref. \cite{well1}.  

\begin{figure}[h]
\begin{center}
\includegraphics[width=0.5\textwidth]{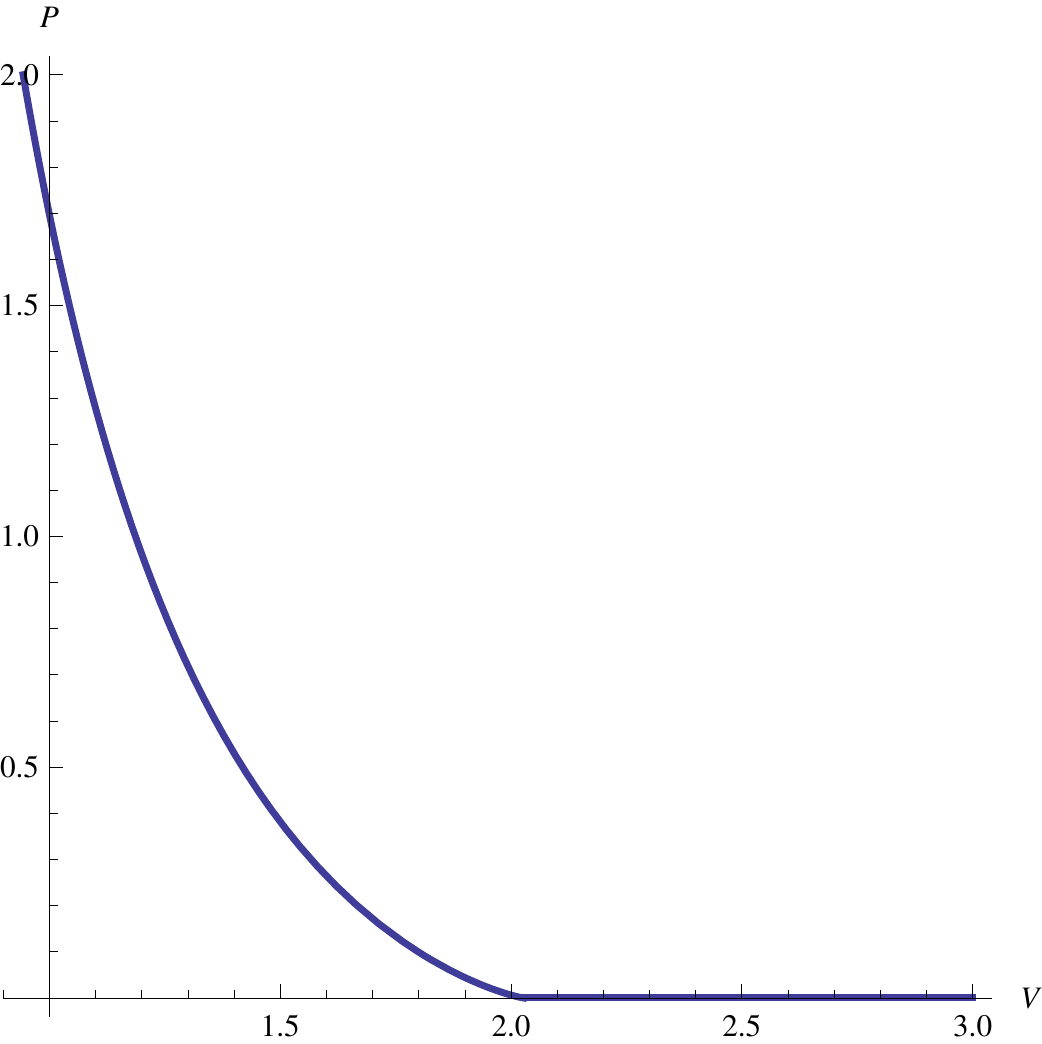}
\caption{Equation of state for the potential ${\cal U} = \eta^\frac{2}{3}$. }
\label{Fig-eos}
\end{center}
\end{figure}

The obvious question now is what our results imply for the problem of the too high compression modulus for nuclear matter described by Skyrme models. As explained in the Introduction, one possible underlying source of the problem is that if a simple uniform (Derrick) rescaling of the baryon density under external pressure is assumed, then it may be shown easily that skyrmions are much more incompressible than physical nuclear matter. Further, it may be shown with the help of standard thermodynamic arguments that for nuclear matter with a constant baryon density the assumption of uniform rescaling under external pressure does apply. If we restrict to the BPS Skyrme model, then the paradox is resolved by the observation that the baryon density for BPS skyrmions is not constant and, consequently, a BPS skyrmion does not respond with a uniform rescaling to external pressure. Our explicit results for the thermodynamcs of classical BPS skyrmions show that the true compression modulus in this case is, in fact, zero. In other words, BPS skyrmions react rather differently to an excitation of the simplest degree of freedom (the uniform rescaling) on the one hand, and to an adiabatic compression to a new true equilibrium state (a static solution) of constant pressure, on the other hand, being quite hard (incompressible) in the former case, but much softer in the latter.

The BPS Skyrme model, however, is only an approximation, whereas a more complete description of nuclei within the Skyrme model framework certainly requires the inclusion of further terms. 
The determination of the volumes, energies and compressibilities for more general Skyrme models then requires the solution of the corresponding Euler-Lagrange equations for nonzero external pressure. There are no longer infinitely many symmetries at our disposal which would allow to change the shapes of solutions, therefore these solutions will have definite shapes. For higher baryon number, like their zero pressure counterparts (standard skyrmions), they will, in general, not be spherically or axially symmetric but preserve some discrete symmetries, at most. Their determination is, therefore, a complicated numerical problem. One may try, however, to achieve a more modest goal, namely the numerical determination of the $B=1$ hedgehog for nonzero external pressure and, hence, its thermodynamic properties. The compressibility of the hedgehog will not be exactly equal to the compressibility of nuclear matter, which corresponds to the case of large $B$, but it might, nevertheless, provide us with some approximate or qualitative information. For the hedgehog we have the spherical symmetry at our disposal (i.e., the skyrmion profile $\xi$ only depends on the radius $r$), therefore the condition of nonzero pressure may be implemented simply as a boundary condition. Indeed, a skyrmion with constant pressure $P$ and radius $r=R$ is given by a profile $\xi (r)$ which obeys
\be
\xi (R)=0 \quad \mbox{and} \quad {\cal P}(r=R) \equiv \frac{1}{3} \sum_i T_{ii} (r=R) =P={\rm const.}
\ee
We remark that the energy-momentum tensor for generalized Skyrme models is more complicated. Specifically, it is no longer true that the pressure is constant in the interior of the skyrmion, so the constant pressure condition has to be implemented as a boundary condition at the skyrmion surface. Some first numerical results for the model $ {\cal L}_2 + {\cal L}_0 + {\cal L}_6$ (i.e., the BPS Skyrme model plus the standard nonlinear sigma model term) indicate that the resulting compressibility $\kappa = -V^{-1} (\partial V/\partial P)_{P=0}$ is still infinite. 

One first possible generalization is to directly use the definition (\ref{comp-mod}) of the compression modulus for the same generalized Skyrme model $ {\cal L}_2 + {\cal L}_0 + {\cal L}_6$. In this model, again, the volume is not defined thermodynamically, so the resulting thermodynamics might be more complicated, with no direct relation between the compressibility (which apparently still is infinite) and the compression modulus (which might then be non-zero). 

Another direction for further investigation is motivated by the following observation. The physical measurements which give rise to the original problem, i.e., the measurements of the Roper resonance and the compression modulus of nuclear matter, are, in fact, measurements of quantum excitations in both cases, namely of the proper Roper resonance and of the excitation energies of giant monopole resonances, respectively. They may both be related to the same simple classical quantity (the compression modulus defined like ${\cal K} = E^{(2)}/B$, see the Introduction) precisely because of the assumption of a simple uniform rescaling. Indeed, if uniform rescaling is assumed, then the Derrick scaling parameter $\Lambda$ appears in the resulting effective action as a variable (a collective coordinate) whose quantization directly leads to a harmonic oscillator. The classical compression modulus shows up in this quantum harmonic oscillator as a parameter multiplying the harmonic oscillator potential. 
We know, however, that at least BPS skyrmion matter does not respond with uniform rescaling to an external force or pressure. In other words, the Derrick parameter $\Lambda$ is not the softest monopole mode (i.e. the softest excitation which respects the rotational symmetry). 
  The proposal, therefore, is to quantize the pressure $P$. By this we mean the following. We may interpret $P$ just as a parameter which describes a possible spherically symmetric deformation of the original skyrmion. In other words, $P$ is a collective coordinate (not a zero mode, but a parameter which describes a collective degree of freedom). 
As the resulting deformed skyrmion still obeys the static field equations, this is, in fact, the softest possible deformation which goes from the old to the new boundary conditions (from the old to the new compacton volume). It should, therefore, correspond to the softest possible monopole vibrational mode, whose excitation energies may be calculated by quantizing this collective coordinate. The true compression modulus of (BPS) skyrmion matter should then be extracted from these excitation energies. 

These issues are under current investigation.

To summarize, we think that our results on the thermodynamics of BPS skyrmions will be instrumental in the resolution of the puzzle of the high compression modulus (too high stiffness) of the Skyrme model. More generally, these results should provide a first step towards the goal of a reliable description of nuclear thermodynamics within the framework of (generalized) Skyrme models (for a recent study of nuclear thermodynamics we refer to \cite{well1}, where also further references can be found). As said, both the inclusion of further terms into the lagrangian and numerical methods will be required for a more quantitative and more precise study of nuclear matter and its equation of state. Another interesting further step consists in the inclusion of the gravitational interaction into the model, which should then allow to study the formation of neutron stars and their equation of state within the BPS Skyrme model and its generalizations.    

\section*{Appendix: proof of Proposition \ref{yuko}}

Let $(M,g,\vol_M)$, $(N,h,\vol_N)$ be oriented riemannian $n$-manifolds, 
$V:N\ra [0,\infty)$ be smooth, and the energy of a field $\phi:M\ra N$
be 
\[
E(\phi)=\frac12\|\phi^*\vol_N\|_{L^2}^2+\int_MV\circ\phi,
\]
as in section \ref{sec:IIB}. Given a vector bundle ${\sf E}$ over $N$ we
denote 
by $\phi^{-1}{\sf E}$ its pullback to $M$. Associate to any field $\phi$
the section $\mu(\phi)\in\Gamma(\phi^{-1}T^*N)$ which maps $A\in
T_{\phi(x)}N$
to 
\[
\mu(\phi)(A)=\ip{\delta\phi^*\vol_N,\iota_A\phi^*\vol_N}_g
\]
where $\delta=-*\d*$ is the $L^2$ adjoint of the exterior differential $\d$,
and $\iota$ denotes interior product. Let the {\em tension field} of $\phi$
be
\[
\tau(\phi)=-\sharp_h\mu(\phi)-(\grad V)\circ \phi\in\Gamma(\phi^{-1}TN)
\]
where $\sharp_h:T^*N\ra TN$ denotes the metric isomorphism induced by $h$.

\begin{lemma} Let $\phi_t$ be a smooth variation of $\phi=\phi_0:M\ra N$,
with infinitesimal generator
$X=\partial_t\phi_t|_{t=0}\in\Gamma(\phi^{-1}TN)$.
Then
$$
\frac{d\: }{dt}\bigg|_{t=0}E(\phi_t)=-\ip{X,\tau(\phi)}_{L^2},
$$
that is, the Euler-Lagrange equation for $E$ is precisely $\tau(\phi)=0$.
\end{lemma}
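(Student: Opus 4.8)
The plan is to vary the two terms of $E$ separately. The potential term is routine: since $\frac{d}{dt}\big|_{t=0}(V\circ\phi_t)=\d V_\phi(X)=\ip{(\grad V)\circ\phi,X}_h$ pointwise, its contribution to the first variation is $\ip{(\grad V)\circ\phi,X}_{L^2}$, which already matches the $(\grad V)\circ\phi$ piece of $-\tau(\phi)$. All the real work is in the topological term, whose variation is
\[
\frac{d}{dt}\Big|_{t=0}\tfrac12\|\phi_t^*\vol_N\|_{L^2}^2=\ip{\phi^*\vol_N,\tfrac{d}{dt}\big|_{t=0}\phi_t^*\vol_N}_{L^2},
\]
so the crux is to compute $\frac{d}{dt}\big|_{t=0}\phi_t^*\vol_N$ and recognise it as an exact form.

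To do this I would pass to the total map $\Phi:M\times(-\epsilon,\epsilon)\ra N$, $\Phi(x,t)=\phi_t(x)$, and write $\phi_t^*\vol_N=i_t^*\Omega$, where $\Omega:=\Phi^*\vol_N$ and $i_t:M\ra M\times(-\epsilon,\epsilon)$ is the slice $x\mapsto(x,t)$. Because the slices are carried into one another by the translation flow of $\pa_t$, the standard flow formula gives $\frac{d}{dt}\big|_{t=0}i_t^*\Omega=i_0^*\mathcal{L}_{\pa_t}\Omega$, and Cartan's identity yields $\mathcal{L}_{\pa_t}\Omega=\d\iota_{\pa_t}\Omega+\iota_{\pa_t}\d\Omega$. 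The decisive simplification is that $\vol_N$ is a top-degree form on the $n$-manifold $N$, so $\d\vol_N=0$ and hence $\d\Omega=\Phi^*\d\vol_N=0$; only the first term survives. Since pullback commutes with $\d$,
\[
\frac{d}{dt}\Big|_{t=0}\phi_t^*\vol_N=i_0^*\d\iota_{\pa_t}\Omega=\d\big(i_0^*\iota_{\pa_t}\Omega\big)=\d\big(\iota_X\phi^*\vol_N\big),
\]
where $\iota_X\phi^*\vol_N$ denotes the $(n-1)$-form on $M$ sending $(v_1,\dots,v_{n-1})$ to $\vol_N(X,\phi_*v_1,\dots,\phi_*v_{n-1})$; the last equality uses $\Phi_*\pa_t|_{t=0}=X$ and $\Phi_*(v_j,0)=\phi_*v_j$.

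Assembling the pieces and integrating by parts via the adjoint property of $\delta=-*\d*$,
\[
\frac{d}{dt}\Big|_{t=0}\tfrac12\|\phi_t^*\vol_N\|_{L^2}^2=\ip{\phi^*\vol_N,\d(\iota_X\phi^*\vol_N)}_{L^2}=\ip{\delta\phi^*\vol_N,\iota_X\phi^*\vol_N}_{L^2}.
\]
The pointwise integrand on the right is, by definition, $\mu(\phi)(X)=\ip{\sharp_h\mu(\phi),X}_h$, so this term equals $\ip{\sharp_h\mu(\phi),X}_{L^2}$. Adding back the potential contribution gives $\frac{d}{dt}\big|_{t=0}E(\phi_t)=\ip{\sharp_h\mu(\phi)+(\grad V)\circ\phi,X}_{L^2}=-\ip{X,\tau(\phi)}_{L^2}$, which is the assertion; the Euler-Lagrange equation is then exactly $\tau(\phi)=0$.

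The main obstacle to handle carefully is that $X$ is a section of $\phi^{-1}TN$, i.e. a vector field \emph{along} $\phi$ rather than an honest vector field on $N$, so the contraction $\iota_X\vol_N$ and the identity $i_0^*\iota_{\pa_t}\Omega=\iota_X\phi^*\vol_N$ cannot simply be quoted from the ordinary Lie-derivative calculus; they must be verified degree-by-degree. This is precisely what the passage to the product $M\times(-\epsilon,\epsilon)$ makes rigorous, since there $\pa_t$ is a genuine vector field whose pushforward restricts to $X$ on the slice $t=0$. The only remaining point needing care is the vanishing of the boundary term in the integration by parts, which holds here because the variations $X$ are taken to be compactly supported, as is appropriate for deriving the Euler-Lagrange equation.
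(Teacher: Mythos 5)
Your proof is correct and takes essentially the same route as the paper's: the variation formula $\partial_t|_{t=0}\phi_t^*\vol_N=\d(\iota_X\phi^*\vol_N)$, followed by integration by parts via the adjoint $\delta$ and the identification of the resulting integrand with $\mu(\phi)(X)=\ip{\sharp_h\mu(\phi),X}_h$, exactly as in the paper. The only difference is that the paper simply cites this variation formula as ``the homotopy lemma'', whereas you derive it explicitly via Cartan's formula for $\Phi^*\vol_N$ on $M\times(-\epsilon,\epsilon)$ --- a worthwhile amplification of a quoted step, but not a different argument.
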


\begin{proof}
By the homotopy lemma,
$\partial_t|_{t=0}\phi_t^*\vol_N=\d(\iota_X\phi^*\vol_N)$, so
\begin{eqnarray*}
\frac{d\:
}{dt}\bigg|_{t=0}E(\phi_t)&=&\ip{\phi^*\vol_N,d(\phi^*\iota_X\vol_N}_{L^2}+\int_MdV_\phi(X)\\
&=&\ip{\delta(\phi^*\vol_N),\phi^*\iota_X\vol_N}_{L^2}+\ip{(\grad V)\circ
\phi,
X}_{L^2}\\
&=&-\ip{\tau(\phi),X}_{L^2}.
\end{eqnarray*}
\end{proof}

Let $F_\phi:M\ra\R$ denote the function $*\phi^*\vol_N$, so $\phi^*\vol_N
=F_\phi\vol_M$, and ${\cal P}=\frac12F_\phi^2-V\circ\phi$. We wish to
prove that ${\cal P}\geq 0$ being constant implies $\tau(\phi)=0$. This
follows 
quickly from the following lemma, originally proved in the special case
 $n=2$ in
in \cite{speight-semicompactons}:

\begin{lemma} For any smooth map $\phi:M\ra N$ and vector field $Y$ on $M$,
$$
h(\tau(\phi),d\phi(Y))=(\d{\cal P})(Y).
$$
\end{lemma}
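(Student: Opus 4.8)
The plan is to reduce the identity to a single pointwise algebraic statement about $\mu(\phi)$ and then verify that statement by Hodge-theoretic bookkeeping. First I would expand the left-hand side using $\tau(\phi)=-\sharp_h\mu(\phi)-(\grad V)\circ\phi$ and the defining property $h(\sharp_h\omega,A)=\omega(A)$ of the musical isomorphism, giving
\[
h(\tau(\phi),d\phi(Y))=-\mu(\phi)(d\phi(Y))-h((\grad V)\circ\phi,\,d\phi(Y)).
\]
The potential term is immediate from the chain rule: $h((\grad V)\circ\phi,d\phi(Y))=(\d V)(d\phi(Y))=\d(V\circ\phi)(Y)$. Since ${\cal P}=\tfrac12 F_\phi^2-V\circ\phi$ gives $(\d{\cal P})(Y)=F_\phi\,\d F_\phi(Y)-\d(V\circ\phi)(Y)$, the potential contributions on the two sides cancel, and the whole lemma collapses to the single identity $\mu(\phi)(d\phi(Y))=-F_\phi\,\d F_\phi(Y)$.

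To establish this I would simplify the two factors in $\mu(\phi)(d\phi(Y))=\ip{\delta\phi^*\vol_N,\,\iota_{d\phi(Y)}\phi^*\vol_N}_g$ separately. For the first factor, using $\phi^*\vol_N=F_\phi\vol_M$, the convention $\delta=-*\d*$, and $*\vol_M=1$, I obtain $\delta\phi^*\vol_N=-*\d*(F_\phi\vol_M)=-*\d F_\phi$, an $(n-1)$-form. For the second factor, I would read $\iota_{d\phi(Y)}\phi^*\vol_N$ as $\phi^*(\iota_{d\phi(Y)}\vol_N)$ and invoke the naturality identity $\phi^*(\iota_{d\phi(Y)}\vol_N)=\iota_Y\phi^*\vol_N$, which follows directly from the pointwise definitions of pullback and interior product; hence this factor equals $\iota_Y(F_\phi\vol_M)=F_\phi\,\iota_Y\vol_M$.

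It then remains to evaluate $\ip{-*\d F_\phi,\,F_\phi\,\iota_Y\vol_M}_g$. Here I would use the standard identity $\iota_Y\vol_M=*(Y^\flat)$ together with the fact that on an oriented riemannian manifold the Hodge star is a pointwise isometry of forms, so that
\[
\ip{*\d F_\phi,\,*Y^\flat}_g=\ip{\d F_\phi,\,Y^\flat}_g=\d F_\phi(Y).
\]
Pulling out the scalar factor $F_\phi$ and the overall sign yields $\mu(\phi)(d\phi(Y))=-F_\phi\,\d F_\phi(Y)$, which is exactly what is needed; assembling this with the first step gives $h(\tau(\phi),d\phi(Y))=F_\phi\,\d F_\phi(Y)-\d(V\circ\phi)(Y)=(\d{\cal P})(Y)$.

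I expect the only genuine obstacle to be bookkeeping rather than conceptual: correctly interpreting $\iota_{d\phi(Y)}\phi^*\vol_N$ in the pulled-back bundle, checking the naturality identity carefully at a point, and tracking the Hodge-star sign conventions (the value of $**$ on $1$- and $(n-1)$-forms, and the precise sign in $\delta=-*\d*$) so that no stray $(-1)$ survives. Because both the isometry property of $*$ and the identity $\iota_Y\vol_M=*Y^\flat$ hold without sign ambiguity in the riemannian case, I anticipate the signs to close up cleanly, leaving the remainder of the computation mechanical.
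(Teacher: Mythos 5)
Your proof is correct and takes essentially the same route as the paper's: the same expansion of $\tau(\phi)=-\sharp_h\mu(\phi)-(\grad V)\circ\phi$, the same identities $\delta(\phi^*\vol_N)=-*\d F_\phi$ and $\phi^*(\iota_{\d\phi(Y)}\vol_N)=\iota_Y\phi^*\vol_N$, and the same final Hodge-star evaluation. The only cosmetic difference is that you remove both stars at once via the isometry property together with $\iota_Y\vol_M=*(\flat_g Y)$, whereas the paper double-stars each argument and cancels the two resulting $(-1)^{n+1}$ signs --- the identical computation written slightly differently.
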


\begin{proof} We note that $\delta(\phi^*\vol_N)=-*\d F_\phi$ and, for
$X=\d\phi(Y)$, $\phi^*(\iota_X\vol_N)=\iota_Y\phi^*\vol_N$. Hence
\begin{eqnarray*}
h(\tau(\phi),d\phi(Y))&=&-\ip{-*\d F_\phi,\iota_Y(F_\phi\vol_M)}_g
-\ip{(\grad V)\circ \phi,\d\phi(Y)}_h\\
&=&F_\phi\ip{**\d F_\phi,*\iota_Y\vol_M}_g
-\ip{(\grad V)\circ \phi,\d\phi(Y)}_h\\
&=&F_\phi\ip{(-1)^{n+1}\d F_\phi,(-1)^{n+1}\flat_g Y}_g
-\ip{(\grad V)\circ \phi,\d\phi(Y)}_h\\
&=&\frac12 \d(F_\phi^2)(Y)-\d(V\circ\phi)(Y),
\end{eqnarray*}
where $\flat_g:TM\ra T^*M$ is the metric isomorphism induced by $g$.
\end{proof}

We note in passing that it follows immediately from this lemma that
static solutions have constant pressure ($\tau(\phi)=0$ implies $\d{\cal
P}=0$).

Conversely, let $\phi$ have constant pressure ${\cal P}\geq 0$ on some
region $\Omega\subseteq M$. If ${\cal P}=0$, then $\phi$ is BPS, and hence
automatically solves the static field equation, so assume ${\cal P}>0$. 
Then $\phi$ has no critical points in $\Omega$
(if ${\rm rank}(\d\phi_x)<n$ then $F_\phi(x)
=0$, so ${\cal P}=-V(\phi(x))\leq 0$, a contradiction). Since $\d{\cal
P}_x=0$, 
$\tau(\phi)(x)\in T_{\phi(x)}N$ is orthogonal to $\d\phi_x(T_xM)$ for each
$x\in\Omega$. But $\d\phi_x(T_xM)=T_{\phi(x)}N$ since $\d\phi_x$ has maximal
rank. Hence $\tau(\phi)(x)=0$.

\section*{Acknowledgement}
The authors acknowledge financial support from the Ministry of Education, Culture and Sports, Spain (grant FPA2008-01177), 
the Xunta de Galicia (grant INCITE09.296.035PR and
Conselleria de Educacion), the
Spanish Consolider-Ingenio 2010 Programme CPAN (CSD2007-00042), FEDER and the UK Engineering and Physical Sciences Research Council. 
CN thanks the Spanish
Ministery of
Education, Culture and Sports for financial support (grant FPU AP2010-5772).
Further, AW was supported by polish NCN (National Science Center) grant DEC-2011/01/B/ST2/00464 (2011-2014). AW thanks Nick Manton for stimulating discussions on the issue of incompressibility in the Skyrme models. Further, he thanks S. Krusch for useful discussions. CA, CN and JSG thank G. Torrieri and R. Vazquez for helpful discussions.

\end{document}